\newtheorem{theorem}{Theorem}
\newtheorem{definition}{Definition}
\newtheorem{observation}{Observation}
\newtheorem{lemma}{Lemma}
\def\tsc#1{\csdef{#1}{\textsc{\lowercase{#1}}\xspace}}
\begin{document}
\let\WriteBookmarks\relax
\def\floatpagepagefraction{1}
\def\textpagefraction{.001}
\shorttitle{Maximum Exposure Problem}
\shortauthors{Remi Raman et. al.}

\title [mode = title]{On the Parameterized Complexity of the Maximum Exposure Problem}                      



\author[1]{Remi Raman}
\ead{remi_p170027cs@nitc.ac.in}



\author[1]{Shahin John J S}
\ead{ivinjohn98@gmail.com}

\author[1]{ R Subashini}
\ead{suba@nitc.ac.in}

\author[1]
{Subhasree Methirumangalath}
\ead{subha@nitc.ac.in}



\begin{abstract}
We investigate the parameterized complexity of Maximum Exposure Problem (MEP). Given a range space ($R, P$) where $R$ is the set of ranges containing a set  $P$ of points, and an integer $k$, MEP asks for $k$ ranges which on removal results in the maximum number of exposed points. A point $p$ is said to be exposed when $p$ is not contained in any of the ranges in $R$. The problem is known to be NP-hard. In this letter, we give fixed-parameter tractable results of MEP with respect to different parameterizations. 
\end{abstract}

\begin{keywords}
Maximum Exposure Problem \sep Computational Geometry\sep Parameterized Complexity 
\end{keywords}

\maketitle

\section{Introduction}

Consider $n$ sensors deployed for tracking assets in an area. The topology of sensing zones and assets can be described using a range space ($R, P$) where the sensing zones are represented by ranges in $R$ and assets are represented by the points in $P$. We are interested in finding the maximum number 
of exposed assets when $k$ number of sensors are compromised. 
From the computational geometry point of view, this problem
is the Maximum Exposure Problem (MEP) \cite{KUMAR2022101861}.
Given a range space ($R, P$), MEP asks for a set of $k$ ranges,
which on removal results in a maximum number of exposed
points. A point p is said to be exposed, if it is not contained
in any of the ranges. In Figure \ref{fig:my_input}, the ranges are axis-aligned  rectangles  and  if $k$ = 2,  one of the  solutions  is $R_4$ and $R_5$ that exposes 10 points.\\
MEP is shown to be NP-hard and it is also hard to approximate even when ranges in $R$ are translates of two fixed rectangles \cite{KUMAR2022101861}. The authors also proposed a polynomial-time approximation scheme (PTAS) if $R$ only consists of translates of a single rectangle. For polygons with constant number of sides and for arbitrary disks, an $O(k)$ bicriteria approximation algorithm is also presented. 
They also showed that MEP with convex polygin ranges is equivalent to the densest $k$-subhypergraph problem on a dual hypergraph $H = (X, E)$, where the ranges in $R$ corresponds to vertices in $X$
and the set of points $P$ corresponds to edges in $E$. 
Given a hypergraph $H = (X, E)$, the densest $k$-subhypergraph problem finds a set of $k$ vertices with the maximum number of induced hyperedges. \\
Parameterized complexity\cite{cygan2015parameterized,downey2013fundamentals,niedermeier2006invitation} offers a methodology
for solving NP-hard problems by expressing their running time in terms of one or more
parameters, in addition to the input size. Fixed parameter tractable algorithms have their running times of the form $O(f(k).n^c)$ where $n$ is the size of the input instance, $k$ is a non-negative integer parameter, $f$ is a computable function depending only on $k$, and $c$ is a constant. The problems for which we can find such algorithms are referred to as \textit{fixed parameter tractable} (FPT).
\noindent
In addition, to deal with problems that  that do not
admit a fixed parameter tractable algorithm,
Downey and Fellows defined a fixed parameter reduction and a hierarchy of classes $W[1] \subseteq W[2] \subseteq \ldots$ that includes fixed parameter intractable problems. 

We study parameterizations of MEP with respect to two different parameters and a combination of two parameters. We use the greediness of parameterization approach proposed by Bonnet et. al. \cite{bonnet2015multi}
which is based on branching algorithms. This approach has recently gained much attraction in fixed cardinality parameterized problems \cite{lokshtanov2019balanced, saurabh2019parameterized, shachnai2017parameterized}.
We also use the result by Lenstra and Kannan \cite{kannan1987minkowski, lenstra1983integer}, which shows that ILP optimization is fixed parameter tractable when parameterized by the number of variables.\\
\noindent
The rest of the paper is organized as follows. In section 2, the problem definitions and a brief overview of the results in this letter are provided. Section 3 gives the two FPT results of MEP using greediness of parameterization approach and integer linear programming approach.

\section{Our Results}
\noindent
We consider the following parameterized versions of MEP.

\begin{definition}[\textbf{$k$-MEP}] Given a range space ($R, P$) where $R$ is the set of ranges and $P$ is the set of points, and an integer parameter $k$ $\geq$ 1, $k$-MEP asks for $k$ ranges in $R$ which on removal results in maximum number of exposed points.
\end{definition}

\begin{definition}[\textbf{($l,k$)-MEP}] Given a range space ($R, P$) where $R$ is the set of ranges and $P$ is the set of points, with each range overlapping on at most $l$ other ranges and an integer parameter $k$ $\geq$ 1, ($l,k$)-MEP asks for $k$ ranges in $R$ which on removal results in maximum number of exposed points.
\end{definition}
In the following definition, we refer to the polygonal regions formed by the range space as \textit{cells}, and the set of cells that contains points as $D$. In Figure \ref{fig:my_input}, the ranges $R_1$ and $R_2$ alone forms three cells:  $R_1$ $\setminus$ $R_2$, $R_1$ $\cup$ $R_2$, and $R_2$ $\setminus$ $R_1$.

\begin{definition}[\textbf{$d$-MEP}] Given a range space ($R, P$) where $R$ is the set of ranges and $P$ is the set of points, and the number of cells in the range space that contain points $|D|$ = $d$, $d$-MEP asks for $k$ number of ranges in $R$ which on removal results in maximum number of exposed points.
\end{definition}
\noindent
Our first result presented in \hyperref[lemma1]{Lemma 1} shows that $k$-MEP is W[1]-hard, which directly follows from the following observations:
\begin{observation} \cite{KUMAR2022101861}
MEP for convex polygons is equivalent to the densest-$k$- subhypergraph  problem for hypergraphs.
\end{observation}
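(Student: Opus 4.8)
The plan is to establish the equivalence by giving a parameter-preserving reduction in each direction, both built on the same dual incidence correspondence between ranges and points. For the forward direction, given a range space $(R,P)$ of convex polygons, I would construct the dual hypergraph $H=(X,E)$ by letting each range $r\in R$ become a vertex $x_r\in X$ and each point $p\in P$ become a hyperedge $e_p=\{x_r : p\in r\}$ consisting of exactly those vertices whose ranges contain $p$. The key observation is that after deleting a set $S\subseteq R$ of $k$ ranges, a point $p$ is exposed precisely when every range containing it has been deleted, i.e.\ when $e_p\subseteq S$ (viewing $S$ as a vertex set). Thus the exposed points are in bijection with the hyperedges induced by $S$, so maximizing the number of exposed points over all $k$-subsets is identical to finding the densest $k$-subhypergraph of $H$. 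This direction is essentially bookkeeping and carries the value $k$ through unchanged.

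For the reverse direction I would show that every hypergraph arises as such a dual, so that an arbitrary densest-$k$-subhypergraph instance can be realized geometrically by convex polygons. Given $H=(X,E)$, I would place one point $p_e$ per hyperedge $e\in E$ on a circle, so that the point set $P=\{p_e\}$ is in strictly convex position, and then define, for each vertex $x\in X$, the range $r_x$ to be the convex hull of $\{p_e : x\in e\}$, fattened by an arbitrarily small amount so that it becomes a genuine two-dimensional convex polygon. The crucial claim is that $r_x$ contains $p_e$ if and only if $x\in e$: this is exactly where strict convex position is used, since a point lying on a circle never belongs to the convex hull of the other circle points, so the fattened hull of a subset can be made to exclude every point outside that subset. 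With this realization, deleting the ranges corresponding to a vertex set $T$ exposes $p_e$ iff $e\subseteq T$, so exposed points again correspond to hyperedges induced by $T$, and an optimal densest-$k$-subhypergraph solution transfers directly to an optimal MEP solution on convex polygons.

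The main obstacle is the geometric realizability in the reverse direction: one must guarantee that each constructed range captures exactly its prescribed subset of points and no extraneous ones, and that the fattening required to turn possibly degenerate hulls (single points or segments) into legitimate convex polygons does not accidentally engulf an excluded point. I expect this to be handled cleanly by the strict-convex-position placement together with a uniform fattening radius smaller than the minimum distance from any point to the hull of any subset not containing it; since there are only finitely many vertex--hyperedge pairs, a short finiteness argument shows such a positive radius exists. Putting the two reductions together, and noting that both preserve $k$ and run in polynomial time, yields the claimed equivalence between MEP for convex polygons and the densest-$k$-subhypergraph problem.
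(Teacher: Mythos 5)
Your proposal is correct and follows essentially the same route as the source of this observation: the paper itself states it without proof, citing \cite{KUMAR2022101861}, and that reference establishes the equivalence exactly via your dual correspondence (ranges as vertices, points as hyperedges, with $p$ exposed iff $e_p\subseteq S$) together with the same geometric realization placing one point per hyperedge in convex position on a circle and taking each range to be the (slightly fattened) convex hull of the points of its incident hyperedges. Your handling of the degenerate hulls and the uniform fattening radius is a sound way to make the standard construction rigorous, so there is nothing to fix.
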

\begin{observation}\cite{cai2008parameterized}
Densest k-subgraph problem is not FPT, i.e., it is W[1]-hard, with respect to $k$ even for regular graphs.
\end{observation}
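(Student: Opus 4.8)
The statement to establish is that the densest $k$-subgraph problem, parameterized by the size $k$ of the sought vertex set, is $W[1]$-hard even when the input is restricted to regular graphs. The plan is to give a parameterized reduction from \textsc{Clique}, the canonical $W[1]$-complete problem under the parameter ``solution size''. The starting observation is elementary: in any graph, a set of $k$ vertices induces at most $\binom{k}{2}$ edges, with equality if and only if the set is a clique. Hence the decision version of densest $k$-subgraph with density threshold $t=\binom{k}{2}$ is \emph{exactly} \textsc{Clique}, and the map $(G,k)\mapsto(G,k,\binom{k}{2})$ is a parameter-preserving reduction. This already yields $W[1]$-hardness on general graphs; the whole difficulty lies in the qualifier ``even for regular graphs''.

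To obtain regularity, I would compose the above with a degree-equalizing reduction that turns an arbitrary \textsc{Clique} instance $(G,k)$ into a regular instance $(G',k)$ while preserving the answer. First compute $\Delta=\max_v \deg_G(v)$ and the deficiency $\delta(v)=\Delta-\deg_G(v)$ of each vertex. Then attach a \emph{padding layer}: a set of auxiliary vertices joined to the deficient vertices of $G$ so as to raise every original degree to $\Delta$, together with additional internal edges among the auxiliary vertices that raise their own degrees to $\Delta$ as well, so that $G'$ becomes $\Delta$-regular. The number of auxiliary vertices and edges is polynomial in $|V(G)|$ and $\Delta$, and $k$ is left untouched, so the composition remains an FPT reduction.

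The correctness argument, and simultaneously the main obstacle, is to guarantee that the padding cannot create a denser $k$-subgraph than a genuine clique of $G$. Concretely, for $k\ge 3$ one must argue that every $k$-vertex set attaining $\binom{k}{2}$ edges lies entirely inside the copy of $G$ and is therefore a $k$-clique of $G$; conversely a $k$-clique of $G$ survives verbatim in $G'$. This forces the padding to be \emph{locally sparse}: the auxiliary vertices should induce a triangle-free (for instance bipartite) regular graph, and the edges joining them to $G$ must be placed so that no triangle --- and hence no clique of size at least three --- passes through a padding vertex. Balancing these two requirements is delicate, because the very edges added to equalize degrees are what threaten to introduce short cycles; the heart of the proof is therefore a counting and realizability argument (in the spirit of Gale--Ryser degree realizability) showing that the padding can be made $\Delta$-regular while remaining triangle-free and free of such mixed triangles. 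The degenerate cases $k\le 2$ are handled directly. Combining the threshold reduction with this regularization establishes $W[1]$-hardness of densest $k$-subgraph on regular graphs, as claimed in the cited observation.
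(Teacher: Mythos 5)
You should first be aware that the paper contains no proof of this statement: it appears as Observation 2, quoted as a black box from Cai \cite{cai2008parameterized}, and its only role is to combine with Observation 1 (equivalence of MEP with densest $k$-subhypergraph) to yield Lemma 1. So there is no in-paper argument to match your attempt against; the question is only whether your blind attempt is itself a valid proof of the cited fact. Your skeleton is correct as far as it goes: the threshold reduction $(G,k)\mapsto(G,k,\binom{k}{2})$ is exactly right, since a $k$-set induces $\binom{k}{2}$ edges if and only if it is a clique, and this gives $W[1]$-hardness on general graphs. The standard route in the literature likewise goes through \textsc{Clique} restricted to regular graphs (note, e.g., that complementation preserves regularity and swaps \textsc{Clique} with \textsc{Independent Set}), so your overall strategy is the natural one.

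The weakness is that you stop exactly where the observation's content begins: you assert that a $\Delta$-regular, triangle-free padding avoiding mixed triangles ``can be made'' via a Gale--Ryser-style realizability argument, but never exhibit it, and everything beyond the general-graph case rests on that step. The gap is real but fillable, and you should close it explicitly. One concrete completion: give each deficient vertex $v$ exactly $\delta(v)=\Delta-\deg_G(v)$ \emph{private} new neighbours, all placed in a set $A$; introduce a further set $B$ and a bipartite graph between $A$ and $B$ in which every $A$-vertex receives $\Delta-1$ additional edges and every $B$-vertex receives $\Delta$ (choose $|A|,|B|$ so that $(\Delta-1)|A|=\Delta|B|$ up to dummy $A$-vertices of demand $\Delta$; such bipartite degree sequences are realizable). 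Then the padding is bipartite, hence triangle-free; no padding vertex has two neighbours in $G$ (privacy kills triangles with two $G$-vertices); and each $G$-vertex's padding neighbours lie in the independent side $A$ (killing triangles with one $G$-vertex). Consequently every triangle of $G'$ lies in $G$, so for $k\ge 3$ any $k$-set with $\binom{k}{2}$ induced edges is a clique each of whose vertices lies on a triangle, forcing it entirely into $G$; conversely $k$-cliques of $G$ persist, and $k\le 2$ is trivial. With this construction spelled out, your argument becomes a correct, self-contained proof of the observation --- which is more than the paper offers, since it delegates the whole statement to the citation.
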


\begin{lemma}
\label{lemma1}
MEP is W[1]-hard when parameterized with the solution size $k$.
\end{lemma}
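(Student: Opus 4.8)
The plan is to derive the $W[1]$-hardness of $k$-MEP by chaining the two observations stated above through the intermediate densest $k$-subhypergraph problem, taking care that every reduction in the chain is an FPT reduction that carries the parameter $k$ across unchanged. Since the observations already supply the geometric and hypergraph ingredients, the work is essentially to verify that hardness is inherited along the chain rather than to build a reduction from scratch.

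First I would establish that the densest $k$-subhypergraph problem is itself $W[1]$-hard with respect to $k$. This follows from Observation 2 together with the fact that an ordinary graph is precisely a $2$-uniform hypergraph: any instance of the densest $k$-subgraph problem is, verbatim and with the same value of $k$, an instance of densest $k$-subhypergraph in which every hyperedge has size two, and a fixed set of $k$ vertices induces the same number of hyperedges in either reading. Hence the identity map is a polynomial-time, parameter-preserving many-one reduction, and $W[1]$-hardness transfers upward from the graph problem to the more general hypergraph problem.

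Next I would invoke Observation 1, which asserts that MEP restricted to convex polygon ranges is equivalent to densest $k$-subhypergraph on the dual hypergraph $H=(X,E)$ whose vertices are the ranges and whose hyperedges are the points. The direction I need is the reduction \emph{from} densest $k$-subhypergraph \emph{to} MEP: a hyperedge (point) is induced by a chosen vertex set exactly when all ranges containing that point are selected, i.e. exactly when deleting those ranges exposes the point. Thus a choice of $k$ vertices matches a choice of $k$ ranges to remove, and the number of induced hyperedges equals the number of exposed points, so optima coincide and $k$ is preserved. Because MEP with convex polygon ranges is a special case of MEP, $W[1]$-hardness of the special case immediately lifts to the general problem, completing the chain from densest $k$-subgraph to $k$-MEP.

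The one point that genuinely needs checking, and hence the main obstacle, is that the equivalence of Observation 1 is \emph{parameter-faithful}: that the $k$ of the subhypergraph instance really maps to the $k$ ranges removed and that the geometric realisation of an arbitrary hypergraph has size polynomial in the input. This is where one must lean on the explicit construction underlying \cite{KUMAR2022101861} rather than on a bare equality of optimal values. Everything else is the routine observation that $W[1]$-hardness is inherited both along problem generalisations (graphs into hypergraphs, convex-polygon MEP into MEP) and along parameter-preserving reductions.
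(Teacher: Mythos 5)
Your proposal is correct and follows essentially the same route as the paper, which derives Lemma~1 directly by chaining Observation~1 (the equivalence of convex-polygon MEP with densest $k$-subhypergraph) with Observation~2 (W[1]-hardness of densest $k$-subgraph). In fact, your write-up is more careful than the paper's, which states the implication without spelling out the $2$-uniform embedding, the direction of the equivalence, or the need for the reduction to be parameter-preserving and polynomial in size.
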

 
 For any polygonal ranges with $O(1)$ complexity  (with respect to number of sides of the polygon) or circular disks, we show that the problem is FPT with respect to $l$ and $k$ where $l$ is the maximum number of overlaps in any range (using greediness of parameterization).
Also the problem is FPT with respect to $d$, the number of cells in the range space which contain points (using \textit{integer linear programming}).

\section{Fixed Parameter Tractable Algorithms}

We restrict ourselves to polygon ranges of O(1) complexity and circular disks, since the maximum number of cells that can be formed by an arrangement of $n$ such polygons/disks is $O(n^2)$ (Any arrangement of $n$ line segments/ circles have $O(n^2)$ cells\cite{de1997computational,asadzadeh2011optimal}).  
As an initial preprocessing step, we delete all the points which are contained in more than $k$ ranges. These points are not going to be exposed by removing any of the subset of $k$ ranges.
In the algorithms proposed, we consider
only the set $D$ of cells in the range space with points, as the other cells does not affect the optimal solution. \\  
We define the subset of ranges in $R$ that forms a cell in $D$ as a $cluster$. 
After the preprocessing step, maximum size of a $cluster$ is $k$. Based on the cardinality of the subset of the ranges, we classify the clusters into different cluster types: $1$-cluster, $2$-cluster, $\ldots$, $k$-cluster. In Figure \ref{fig:my_input}, $R_1$, $R_2$ and $R_3$ forms a 3-cluster, and $R_4$ and $R_5$ forms a 2-cluster. Any range in the range space that forms a cell that contains points and is not an overlap region is a $1$-cluster. 
Also, we define $OL(R_i)$ as the set of ranges that overlaps with $R_i$. In Figure \ref{fig:my_input}, $OL(R_4)$ is $\{R_3, R_5\}$.
 
In the following two sections, we explain the fixed parameter tractable results of MEP in O(1) complexity polygon ranges and circular disks with respect to two different parameterizations.

\subsection{An FPT algorithm for ($l,k$)-MEP}

In this section, we use the greediness of parameterization approach proposed by Bonnet et al.\cite{bonnet2015multi} which is based on branching
algorithms.
At each level of the branching tree, a partial solution is extended. 
At the leaves, the optimum solution among all the solution is returned.
The crux of the approach is to
branch on a greedy extension of the partial solution and also on the \textit{neighbourhood} of the greedy extension.
We present a branching algorithm $RecMEP(T, k)$ which maintains a set $T$ of ranges which is initially empty. At each level of branching, $T$ is added with at most $k$ ranges until no more ranges can be added i.e. until $k=$ 0. 
 The basic idea behind our algorithm is the following. At each level of branching tree, we branch on different $i$-clusters where 1 $\leq$ $i$ $\leq$ $k$, which on removal maximizes the number of exposed points. If there are multiple clusters of type $i$ exposing the  maximum number of points at that level, one among them can be chosen randomly.  We also branch on the \textit{neighborhood} of these clusters, which are those ranges overlapping with the ranges in the $i$-cluster considered. A greedy criterion may not be always optimal.  However, if at each step either the greedily chosen $i$-cluster, or some of its overlapping ranges is part of an optimal solution,  then the branching tree has at least one leaf which is an optimal solution. 
 Overall description of our algorithm $RecMEP (T, k)$ is given in \hyperref[algo1]{Algorithm 1}. Initially $T$ is set to $\emptyset$. Also, for a given $T = \{R_1, R_2, \ldots, R_i\}$, we calculate $OL(T)$ as  $OL(T) = OL(R_1) \cup OL(R_2) \cup \ldots \cup OL(R_i).$ 
 \vspace{.2cm}
 \hrule \vspace{.1cm} \label{sec:algo}
 \noindent
 \textbf{Algorithm 1}: $RecMEP (T, k)$ \label{algo1}
 \hrule \vspace{.1cm}
 \begin{itemize}
     \item if $k$ $>$ 0, then
    \begin{itemize}
        \item for each $i$ varying from 1 to $k$ pick a greedy cluster $C_i$ $\in$ $R$ $\setminus$ $\{T \cup OL(T)\} $ that exposes the maximum number of points and call $RecMEP$ ($T \cup{C_i}$, $k-|C_i|$)
        \item for each range $R_j$ $\in$ $OL(T)\setminus T$, call $RecMEP$ ($T \cup{R_j}$, $k-1$).
    \end{itemize}
    \item If $k$ = 0, store $T$ as a feasible solution.
    \item Return a $T$ with maximum number of exposed points as the optimal solution.
 \end{itemize}
 \hrule 
 \vspace{.5cm}
\begin{figure}
    \centering
    \includegraphics[scale=.75]{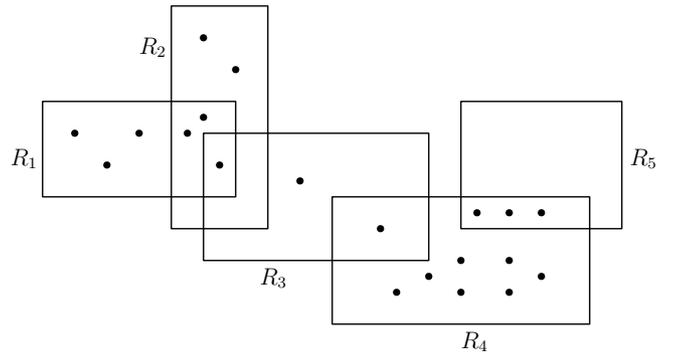}
    \caption{An Input MEP instance with 5 rectangular ranges. For $k$= 2, one of the solutions to MEP is $R_4$ and $R_5$ exposing 10 points.}
    \label{fig:my_input}
\end{figure}

\begin{figure}
     \centerline{\includegraphics[scale=.85]{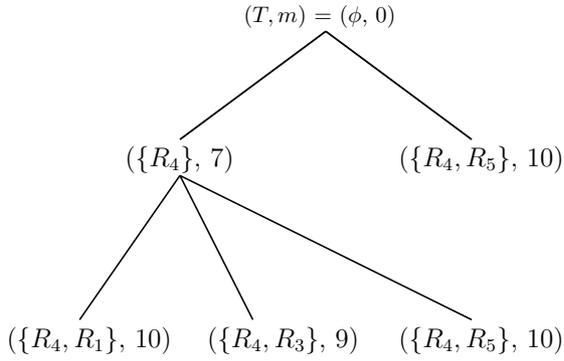}}
     \label{branch}
    \caption{The branching tree obtained for Figure \ref{fig:my_input} with $k$ = 2 and $l$ = 3. In each of the nodes, the set $T$ and the number of exposed points $m$ when $T$ is removed are shown. Here the solution is $\{R_4, R_1\}$ or $\{R_4, R_5\}$ that exposes 10 points}
    \label{fig:branch}
\end{figure}

Figure \ref{fig:branch} shows the branching tree of $RecMEP (T, k)$ for the input range space given in Figure \ref{fig:my_input}. We start with branching on greedy clusters of type one and two as the value of $k$ is two. We cannot further branch from the $2$-cluster $\{R_4, R_5\}$ as our budget $k$ is finished at the first level itself. From the $1$-cluster $\{R_4\}$, we branch on the next greedy $1$-cluster and add $R_1$ to $T$, and also on the two overlaps of the range $R_4$. 
If in the same input instance the value of $k$ = 3, we would have started with three greedy clusters of type 1, 2 and 3. In the next level, a greedy 1-cluster and a 2-cluster is also considered along with the overlapping ranges of $R_4$. Additionally, we would have branched on the chosen 2-cluster $\{ R_4, R_5 \}$ also. 

This branching process continues until the $k$ value becomes zero in all branches. From the leaves of the branching tree, the node with maximum number of exposed points is returned as the solution. 
\\  
Let us now establish the time complexity of the algorithm. 
The number of children of a node of the branching tree is at most $lk-l$. This happens when we branch on a ($k-1$)-cluster and each range in the cluster has $l$ neighbors. At each step, we add either a cluster or a neighbor, in either case depth of the branching tree is at most $k$, and the branching tree has size $O((lk-l)^k)$. On an internal node of the branching tree, algorithm only does polynomial computations and the running time is $O^*(lk-l)^k)$ \footnote{We use O* notation to hide polynomial factors with respect to $n$} or if $l$ $<$ $k$, $O^*(l^{2k})$-time (as cluster size is $min (l, k)$) , i.e., it is fixed parameter tractable with respect to $l$ and $k$.

\begin{theorem}
MEP is fixed parameter tractable with respect to $k$ and $l$, where $l$ is the maximum number of overlaps, a range creates with another one.
\end{theorem}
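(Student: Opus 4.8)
The plan is to prove the two ingredients that together give the claim: that $RecMEP(\emptyset,k)$ runs within the stated $O^*((lk-l)^k)$ time, and that it is correct, i.e.\ some leaf stores an optimal solution. The running time is the easier half and essentially follows from the description already given. Each recursive call either adds a whole greedy cluster (of size at most $\min(l,k)$) or a single overlapping range, so the remaining budget strictly decreases and the recursion depth is at most $k$; and at any node the candidate children are the greedy clusters $C_1,\dots,C_k$ together with the ranges of $OL(T)\setminus T$, whose total number is $O(lk)$ (at most $lk-l$ in the worst case of a $(k-1)$-cluster each of whose ranges has $l$ neighbours). Since only polynomial work is done per node, multiplying the branching factor by the tree size $O((lk-l)^k)$ gives the bound, with the $O^*(l^{2k})$ refinement when $l<k$ because clusters then have size at most $l$. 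First I would state this formally and fold in the preprocessing step, which removes every point lying in more than $k$ ranges so that every cluster has size at most $k$ and the loop over cluster types $1,\dots,k$ is well defined.

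The heart of the proof is correctness, which I would establish through the greediness-of-parameterization invariant of Bonnet et al.\ \cite{bonnet2015multi}: I maintain that along at least one root-to-leaf path the partial solution $T$ is contained in some optimal solution $OPT$. This holds trivially at the root, where $T=\emptyset$. For the inductive step I fix an optimal $OPT\supseteq T$ with budget $k>0$ remaining and show that one of the algorithm's branches produces $T'\supsetneq T$ with $T'\subseteq OPT'$ for some (possibly exchanged) optimal $OPT'$. I would split on whether $OPT$ meets the branched neighbourhood. If $OPT\cap(OL(T)\setminus T)\neq\emptyset$, then some overlapping range $R_j\in OPT$ is branched on and $T\cup\{R_j\}\subseteq OPT$ extends the invariant immediately.

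The remaining case, in which $OPT$ contains no range overlapping $T$, is where the real work lies and is the step I expect to be the main obstacle. Here I would first argue a locality property: because a point is exposed only when every range of its cluster is removed, and no range of $OPT$ overlaps $T$, every cluster that is \emph{productive} for $OPT$ (contributes newly exposed points) is entirely disjoint from and non-overlapping with $T$, hence is a legal candidate for the greedy choice. Picking such a productive cluster $C^\ast\subseteq OPT\setminus T$ of type $j$, the greedy cluster $C_j$ exposes at least as many points as $C^\ast$. The delicate point is the exchange $OPT'=(OPT\setminus C^\ast)\cup C_j$: I must show it is again optimal and contains $C_j$. When $C_j$ neither shares a range with nor overlaps $OPT\setminus C^\ast$, the exposed-point sets are independent, so the swap loses exactly the points of $C^\ast$ and gains those of $C_j$, which is at least as many, keeping the value optimal; the greedy branch $RecMEP(T\cup C_j,k-j)$ then preserves the invariant. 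The subtlety I would have to handle carefully is exactly when $C_j$ does interact with $OPT\setminus C^\ast$ through shared or overlapping ranges, which is precisely the situation the neighbourhood branching on $OL(T)$ is designed to absorb; making this dichotomy airtight, so that every such interaction is captured either by a greedy-cluster branch or by a neighbourhood branch, is the crux of the argument. Once the invariant is shown to survive every step, the leaf reached along the maintained path stores an optimal $T$, and since the algorithm returns the best leaf it returns an optimal solution; combined with the running-time bound, this proves the theorem.
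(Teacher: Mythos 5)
Your proposal follows the same route as the paper: the running-time half is the paper's analysis verbatim (depth at most $k$, branching factor $O(lk)$, hence $O^*((lk-l)^k)$, refined to $O^*(l^{2k})$ when $l<k$), and your correctness argument is the greediness-of-parameterization/hybridization scheme of Bonnet et al.\ \cite{bonnet2015multi} that the paper also invokes; your dichotomy ``$OPT$ meets $OL(T)\setminus T$'' versus ``$OPT$ avoids $OL(T)$'' is exactly the paper's maximal-branch argument in contrapositive form. But the proposal has a genuine gap, and it is the one you yourself flag: the exchange step. Moreover, the ``easy'' half of your case analysis is already incorrect. You claim that if $C_j$ neither shares a range with nor overlaps $OPT\setminus C^\ast$, then ``the swap loses exactly the points of $C^\ast$.'' That is false: the loss is governed by the interaction of $C^\ast$ (the cluster being removed) with $OPT\setminus C^\ast$, not by the interaction of $C_j$ with it. A point can be exposed by $OPT$ because its cell's cluster straddles $C^\ast$ and $OPT\setminus C^\ast$; nothing forbids such cells, since their ranges must pairwise overlap one another but need not overlap $T$. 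Concretely, let $OPT\setminus T=\{r_1,r_2,r_3\}$ with two $5$-point cells formed by $\{r_1,r_2\}$ and $\{r_2,r_3\}$, and let the greedy $2$-cluster be a disjoint pair $\{s_1,s_2\}$ with a $6$-point cell. Then $C^\ast=\{r_1,r_2\}$ is productive, $C_j=\{s_1,s_2\}$ beats it in isolation, and $C_j$ is independent of $OPT\setminus C^\ast=\{r_3\}$; yet the swap destroys both $5$-point cells (removing $r_2$ from the solution kills the cell $\{r_2,r_3\}$ too) while gaining only $6$, so $OPT'$ is strictly worse and the invariant collapses. The greedy guarantee compares isolated exposures, and exposure is not additive across clusters that share ranges, so no choice of productive cluster repairs this.

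You should also know that you have put your finger on precisely the soft spot of the paper's own proof: it selects the largest $i$-cluster of $T_0\setminus B$, notes by maximality that its ranges are not neighbours of $B$, and then asserts in one line that substituting the greedy $i$-cluster ``will expose at least as many points'' -- straddling cells are never addressed, and the example above defeats that assertion verbatim. Indeed, padding the example with a few exclusive points per range (so all greedy choices are forced) yields an instance on which the branching algorithm itself never places any $r_i$ into $T$ and returns a suboptimal leaf, since neighbourhood branching only explores $OL(T)$ and the $r_i$ never enter $T$. So your instinct about where the difficulty lies is exactly right, but flagging the crux is not closing it: as written, your proposal does not prove the theorem, and the greedy-exchange scheme cannot be made airtight without either an exchange argument that accounts for all cells incident to the swapped cluster (chains of clusters sharing ranges) or a modification of the branching rule itself, e.g., branching on all clusters of a given type rather than a single greedy one.
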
  

\begin{proof}

Our proof of correctness is based on the hybridization method used in \cite{bonnet2015multi}. Let $T_0$ be a solution that exposes the maximum number of points. Recall that each node of the branching
tree has at most $k$ greedy clusters adding 1 to $k$ ranges to the solution and up to $lk-l$ neighbours each adding one range to the solution.\\ Let $B$ be the set of ranges in a maximal branch in the branching tree
from the root to a node $v$ such that all the ranges in $B$ are present in $T_0$. By the maximality of the branch, the range(s) in $v$ deviates from $T_0$. 
From $T_0 \setminus B$, find the largest $i$-cluster. None of the ranges in that $i$-cluster is a neighbour of $v$, as $B$ is the set of ranges in the maximal branch.  So, we can substitute the $i$-cluster in $T_0 \setminus B$ by the greedy $i$-cluster from $v$ as it will expose at least as many points as the other cluster. 
From $v$, we consider a maximal branch $B^{'}$ again in accordance with $T_0$, and we iterate the same hybridization method at most $k$  times until we reach a leaf. \\

\end{proof}
  
\subsection{An FPT algorithm for $d$-MEP}

In this section, we solve $d$-MEP by formulating the problem as an integer linear program.

Given a range space ($R, P$) with $|D|$ = $d$ being the number of cells with points. For any cell $c_i$ in $D$, we use $cluster(c_i)$ to denote the set of ranges that forms the cell $c_i$ and $points(c_i)$ for all the points that are contained in the $cluster(c_i)$. Let $S$ be a solution of MEP that on removal results in the maximum number of exposed points. In our algorithm we try to guess a set $C$ = $\{c_1, c_2, \ldots, c_i\}$ (for 1 $\leq$ $i$ $\leq$ $d$) which is a subset of $D$, such that the ranges in $\{cluster(c_1)$ $\cup$ $cluster(c_2)$ $\cup$ $\ldots$ $\cup$ $cluster(c_i)\}$ have a non-empty intersection to the set $S$. 
There are $2^{d}$ such subsets possible for this guess. \\ We then reduce the problem of finding a set $S$ of $k$ ranges which on removal maximizes the number of exposed points to \textit{integer linear programming} (ILP) optimizations with at most $d$ variables in each ILP optimization. 
The ILP optimization problem can be parameterized by the number of variables \cite{fellows2008graph} and, here we parameterize $d$-MEP by the number of cells $d$. Our idea in ILP is to maximize the number of points in the clusters of the subset guessed, subject to the condition that the number of ranges in the clusters is $k$.  Given a set of cells $C\subseteq \{c_1, c_2, \ldots, c_d\}$, we present the ILP formulation of $d$-MEP problem as follows:
\\ \\
\textbf{ILP Formulation}:\\
For each $c_i$, we associate two variables; $x_i$ that indicates $|S\cap cluster(c_i)|$ = $x_i$ and $y_i$ with $|S\cap points(c_i)|$ = $y_i$. The possibility of same ranges containing in different clusters is eliminated by using a Boolean variable $val(c_i)$. $val(c_i)$  is assigned to 0 when $cluster(c_i)$ is a subset of $cluster(c_j)$, where $c_i$, $c_j$ $\in$ $C$ and $i$ $\neq$ $j$, and 1, otherwise.

\begin{equation*}
\begin{array}{l@{}l}
\text{maximize}  \displaystyle\sum\limits_{c_i\in C}^{}val(c_{i}) y_{i} &\\
\text{subject to} \\ \displaystyle\sum x_{i} = k \\  \\ x_{i} \in \{1,2, \dots, |cluster(c_i)|\}     \text{ for all } c_i \in C
              \end{array}
\end{equation*}
\noindent
\textbf{Solving the ILP}:
    It is shown that the feasibility version of a $p$ variable ILP is fixed parameter tractable with running time doubly exponential in $p$ by Lenstra \cite{lenstra1983integer} and Kannan \cite{ kannan1987minkowski}. Frank and Tardos \cite{ frank1987application} improved the algorithm to run in polynomial space. In our case we use the optimization version of $p$-ILP defined by Fellows et.al. \cite{fellows2008graph}. They showed that the optimization version can be solved in $O^*(p^{2.5p}+o(p).L.log (MN))$ time and space polynomial in $L$. Here, $L$ is the number of bits in the input, $N$ is the
maximum of the absolute values any variable can take, and $M$ is an upper bound on the
absolute value of the minimum taken by the objective function. \\
In the formulation for $d$-MEP, we have at most $d$ variables, and $2^d$ such formulations. The value of objective function is bounded by $m$ which is the maximum number of points that can be exposed. The value of any variable in the integer linear programming is also bounded by max($m$, $n$). There are at most $2^d$ choices for $C$, and the ILP formulation for a guess can be solved in FPT time.
Therefore, the total runtime is $O^{*}(2^{d}d^{2.5d})$. Thus, the following theorem holds.

\begin{theorem}
MEP is Fixed parameter tractable with respect to the number of cells $d$ in the range space that contain points.
\end{theorem}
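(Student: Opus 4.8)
The plan is to reduce $d$-MEP to a family of integer linear programs, each having at most $d$ variables, and then appeal to the classical fact that ILP optimization is fixed-parameter tractable in the number of variables, so that the parameter $d$ controls both the number of programs and their dimension. The structural fact on which everything rests is that a point lying in a cell $c_i$ is contained in exactly the ranges of $cluster(c_i)$, and is therefore exposed if and only if every range of $cluster(c_i)$ is removed. Consequently the number of exposed points is additive over cells: it equals the sum of $|points(c_i)|$ over those cells $c_i$ with $cluster(c_i) \subseteq S$. After the preprocessing step that deletes every point lying in more than $k$ ranges, every surviving cluster has size at most $k$, so each cell can be fully exposed within the budget. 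The obstacle to writing this objective directly as a linear program is its all-or-nothing character: the points of $c_i$ contribute only when the whole cluster is removed, which is an indicator condition rather than a linear one.

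The second step removes this nonlinearity by guessing. I would enumerate all $2^{d}$ subsets $C \subseteq D$ and, for each, interpret $C$ as the family of cells whose cluster meets the sought solution $S$. Fixing $C$ determines which variables are active and lets each cell's contribution be written linearly, so for the correct guess the optimal MEP value is recovered, while for every guess the program only returns feasible candidates; taking the maximum over all guesses is therefore safe. Within a guess I would use one integer variable $x_i = |S \cap cluster(c_i)|$ and one variable $y_i$ recording the points charged to $c_i$, the budget constraint $\sum x_i = k$, the domain $x_i \in \{1,\dots,|cluster(c_i)|\}$, and the Boolean coefficient $val(c_i)$ set to $0$ exactly when $cluster(c_i)$ is nested inside another chosen cluster, so that a shared range is not counted twice in the budget and an already-charged point is not exposed twice. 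The objective $\max \sum_{c_i \in C} val(c_i)\, y_i$ then equals the number of newly exposed points.

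The third step is quantitative. Each program has at most $d$ variables; the objective is bounded by the total number of points $m$, and every variable is bounded by $\max(m,n)$, so the bit-length $L$ and the bounds $M, N$ feeding the Lenstra--Kannan--Fellows estimate are all polynomial in the input. Invoking Fellows et al.'s optimization version of $p$-ILP, which runs in $O^{*}(p^{2.5p+o(p)} \cdot L \cdot \log(MN))$ time with $p \le d$, and multiplying by the $2^{d}$ guesses yields the claimed $O^{*}(2^{d} d^{2.5d})$ bound, establishing fixed-parameter tractability in $d$.

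I expect the main obstacle to be the correctness of the reduction rather than the running-time accounting. Two points need care: first, that the guessed family $C$ together with the $val$ coefficients genuinely prevent double counting across nested and overlapping clusters; and second, that any optimal integer assignment $(x_i, y_i)$ for the correct guess can be realized by a single consistent set of exactly $k$ removed ranges whose removal exposes precisely $\sum val(c_i)\, y_i$ points, i.e. that the per-cell counts $x_i$ are simultaneously witnessable and that no cell outside $C$ is accidentally exposed by the removed set.
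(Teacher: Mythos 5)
Your proposal follows essentially the same route as the paper: guess one of the $2^{d}$ subsets $C \subseteq D$ of cells meeting the solution, encode each guess as an ILP with variables $x_i$, $y_i$ and coefficients $val(c_i)$, the budget constraint $\sum x_i = k$, and objective $\max \sum_{c_i \in C} val(c_i)\, y_i$, then invoke the Fellows et al.\ optimization version of $p$-variable ILP to get the $O^{*}(2^{d} d^{2.5d})$ bound. The correctness caveats you flag at the end (double counting under nested clusters, and witnessability of the per-cell counts by an actual set of $k$ ranges) are likewise left implicit in the paper's own argument, so your treatment matches it in both structure and level of rigor.
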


\section{Conclusion}

We show that the maximum exposure problem (MEP) is W[1]-hard with respect to the solution size $k$ and FPT with respect to the parameters ($l,k$) and $d$ where $l$ is the maximum number of overlaps in any range and $d$ is the number of cells in the range space
which contain points.
It will be interesting to investigate further variants of the problem to see which all restricted range spaces gives polynomial time solutions or FPT solutions. It would also be interesting to see if other multiparameterizations are possible for the general case.

\bibliography{ver1.bib}

\begin{thebibliography}{15}
\expandafter\ifx\csname natexlab\endcsname\relax\def\natexlab#1{#1}\fi
\providecommand{\url}[1]{\texttt{#1}}
\providecommand{\href}[2]{#2}
\providecommand{\path}[1]{#1}
\providecommand{\DOIprefix}{doi:}
\providecommand{\ArXivprefix}{arXiv:}
\providecommand{\URLprefix}{URL: }
\providecommand{\Pubmedprefix}{pmid:}
\providecommand{\doi}[1]{\href{http://dx.doi.org/#1}{\path{#1}}}
\providecommand{\Pubmed}[1]{\href{pmid:#1}{\path{#1}}}
\providecommand{\bibinfo}[2]{#2}
\ifx\xfnm\relax \def\xfnm[#1]{\unskip,\space#1}\fi
\bibitem[{Asadzadeh et~al.(2011)Asadzadeh, Kulik, Tanin and
  Wirth}]{asadzadeh2011optimal}
\bibinfo{author}{Asadzadeh, P.}, \bibinfo{author}{Kulik, L.},
  \bibinfo{author}{Tanin, E.}, \bibinfo{author}{Wirth, A.},
  \bibinfo{year}{2011}.
\newblock \bibinfo{title}{On optimal arrangements of binary sensors}, in:
  \bibinfo{booktitle}{International Conference on Spatial Information Theory},
  \bibinfo{organization}{Springer}. pp. \bibinfo{pages}{168--187}.
\bibitem[{Bonnet et~al.(2015)Bonnet, Escoffier, Paschos and
  Tourniaire}]{bonnet2015multi}
\bibinfo{author}{Bonnet, E.}, \bibinfo{author}{Escoffier, B.},
  \bibinfo{author}{Paschos, V.T.}, \bibinfo{author}{Tourniaire, E.},
  \bibinfo{year}{2015}.
\newblock \bibinfo{title}{Multi-parameter analysis for local graph partitioning
  problems: Using greediness for parameterization}.
\newblock \bibinfo{journal}{Algorithmica} \bibinfo{volume}{71},
  \bibinfo{pages}{566--580}.
\bibitem[{Cai(2008)}]{cai2008parameterized}
\bibinfo{author}{Cai, L.}, \bibinfo{year}{2008}.
\newblock \bibinfo{title}{Parameterized complexity of cardinality constrained
  optimization problems}.
\newblock \bibinfo{journal}{The Computer Journal} \bibinfo{volume}{51},
  \bibinfo{pages}{102--121}.
\bibitem[{Cygan et~al.(2015)Cygan, Fomin, Kowalik, Lokshtanov, Marx, Pilipczuk,
  Pilipczuk and Saurabh}]{cygan2015parameterized}
\bibinfo{author}{Cygan, M.}, \bibinfo{author}{Fomin, F.V.},
  \bibinfo{author}{Kowalik, {\L}.}, \bibinfo{author}{Lokshtanov, D.},
  \bibinfo{author}{Marx, D.}, \bibinfo{author}{Pilipczuk, M.},
  \bibinfo{author}{Pilipczuk, M.}, \bibinfo{author}{Saurabh, S.},
  \bibinfo{year}{2015}.
\newblock \bibinfo{title}{Parameterized algorithms}.
  volume~\bibinfo{volume}{3}.
\newblock \bibinfo{publisher}{Springer}.
\newblock \DOIprefix\doi{10.1007/978-3-319-21275-3}.
\bibitem[{De~Berg et~al.(1997)De~Berg, Van~Kreveld, Overmars and
  Schwarzkopf}]{de1997computational}
\bibinfo{author}{De~Berg, M.}, \bibinfo{author}{Van~Kreveld, M.},
  \bibinfo{author}{Overmars, M.}, \bibinfo{author}{Schwarzkopf, O.},
  \bibinfo{year}{1997}.
\newblock \bibinfo{title}{Computational geometry}, in:
  \bibinfo{booktitle}{Computational geometry}. \bibinfo{publisher}{Springer}.
\bibitem[{Downey and Fellows(2013)}]{downey2013fundamentals}
\bibinfo{author}{Downey, R.G.}, \bibinfo{author}{Fellows, M.R.},
  \bibinfo{year}{2013}.
\newblock \bibinfo{title}{Fundamentals of parameterized complexity}.
  volume~\bibinfo{volume}{4}.
\newblock \bibinfo{publisher}{Springer}.
\bibitem[{Fellows et~al.(2008)Fellows, Lokshtanov, Misra, Rosamond and
  Saurabh}]{fellows2008graph}
\bibinfo{author}{Fellows, M.R.}, \bibinfo{author}{Lokshtanov, D.},
  \bibinfo{author}{Misra, N.}, \bibinfo{author}{Rosamond, F.A.},
  \bibinfo{author}{Saurabh, S.}, \bibinfo{year}{2008}.
\newblock \bibinfo{title}{Graph layout problems parameterized by vertex cover},
  in: \bibinfo{booktitle}{International Symposium on Algorithms and
  Computation}, \bibinfo{organization}{Springer}. pp.
  \bibinfo{pages}{294--305}.
\bibitem[{Frank and Tardos(1987)}]{frank1987application}
\bibinfo{author}{Frank, A.}, \bibinfo{author}{Tardos, {\'E}.},
  \bibinfo{year}{1987}.
\newblock \bibinfo{title}{An application of simultaneous diophantine
  approximation in combinatorial optimization}.
\newblock \bibinfo{journal}{Combinatorica} \bibinfo{volume}{7},
  \bibinfo{pages}{49--65}.
\bibitem[{Kannan(1987)}]{kannan1987minkowski}
\bibinfo{author}{Kannan, R.}, \bibinfo{year}{1987}.
\newblock \bibinfo{title}{Minkowski's convex body theorem and integer
  programming}.
\newblock \bibinfo{journal}{Mathematics of operations research}
  \bibinfo{volume}{12}, \bibinfo{pages}{415--440}.
\bibitem[{Kumar et~al.(2022)Kumar, Sintos and Suri}]{KUMAR2022101861}
\bibinfo{author}{Kumar, N.}, \bibinfo{author}{Sintos, S.},
  \bibinfo{author}{Suri, S.}, \bibinfo{year}{2022}.
\newblock \bibinfo{title}{The maximum exposure problem}.
\newblock \bibinfo{journal}{Computational Geometry} \bibinfo{volume}{104},
  \bibinfo{pages}{101861}.
\newblock \URLprefix
  \url{https://www.sciencedirect.com/science/article/pii/S0925772122000049},
  \DOIprefix\doi{https://doi.org/10.1016/j.comgeo.2022.101861}.
\bibitem[{Lenstra~Jr(1983)}]{lenstra1983integer}
\bibinfo{author}{Lenstra~Jr, H.W.}, \bibinfo{year}{1983}.
\newblock \bibinfo{title}{Integer programming with a fixed number of
  variables}.
\newblock \bibinfo{journal}{Mathematics of operations research}
  \bibinfo{volume}{8}, \bibinfo{pages}{538--548}.
\bibitem[{Lokshtanov et~al.(2019)Lokshtanov, Saurabh, Sharma and
  Zehavi}]{lokshtanov2019balanced}
\bibinfo{author}{Lokshtanov, D.}, \bibinfo{author}{Saurabh, S.},
  \bibinfo{author}{Sharma, R.}, \bibinfo{author}{Zehavi, M.},
  \bibinfo{year}{2019}.
\newblock \bibinfo{title}{Balanced judicious bipartition is fixed-parameter
  tractable}.
\newblock \bibinfo{journal}{SIAM Journal on Discrete Mathematics}
  \bibinfo{volume}{33}, \bibinfo{pages}{1878--1911}.
\bibitem[{Niedermeier(2006)}]{niedermeier2006invitation}
\bibinfo{author}{Niedermeier, R.}, \bibinfo{year}{2006}.
\newblock \bibinfo{title}{Invitation to fixed-parameter algorithms}.
  volume~\bibinfo{volume}{31}.
\newblock \bibinfo{publisher}{OUP Oxford}.
\bibitem[{Saurabh and Zehavi(2019)}]{saurabh2019parameterized}
\bibinfo{author}{Saurabh, S.}, \bibinfo{author}{Zehavi, M.},
  \bibinfo{year}{2019}.
\newblock \bibinfo{title}{Parameterized complexity of multi-node hubs}, in:
  \bibinfo{booktitle}{13th International Symposium on Parameterized and Exact
  Computation (IPEC 2018)}, \bibinfo{organization}{Schloss
  Dagstuhl-Leibniz-Zentrum fuer Informatik}.
\bibitem[{Shachnai and Zehavi(2017)}]{shachnai2017parameterized}
\bibinfo{author}{Shachnai, H.}, \bibinfo{author}{Zehavi, M.},
  \bibinfo{year}{2017}.
\newblock \bibinfo{title}{Parameterized algorithms for graph partitioning
  problems}.
\newblock \bibinfo{journal}{Theory of Computing Systems} \bibinfo{volume}{61},
  \bibinfo{pages}{721--738}.

\end{thebibliography}
\bibliographystyle{cas-model2-names}
\vfill



\end{document}